\def\3nab{\tilde{\nabla}}
\def\la {\langle}
\def\ra {\rangle}
\def\be {\begin{equation}}
	\def\ee {\end{equation}} 
\def\ba {\begin{eqnarray}}
	\def\ea {\end{eqnarray}}
\newtheorem{prop}{Proposition}
\newtheorem{thm}{Theorem}
\newtheorem{cor}{Corollary}
\newcommand{\barray}{\begin{array}}
	\newcommand{\earray}{\end{array}}
 \newcommand{\nab}{\nabla}
\newcommand \ep {\epsilon}
\newcommand \om {\omega}
\begin{document}
	
	\title{Necessity of spacetime shear for cosmological gravitational waves}

	\author{Roger M. Mayala}
	\email{rmayala2@gmail.com }
	\affiliation{Astrophysics Research Centre, School of Mathematics, Statistics and Computer Science, University of KwaZulu-Natal, Private Bag X54001, Durban 4000, South Africa.}
	\author{Rituparno Goswami}
	\email{Goswami@ukzn.ac.za}
	\affiliation{Astrophysics Research Centre, School of Mathematics, Statistics and Computer Science, University of KwaZulu-Natal, Private Bag X54001, Durban 4000, South Africa.}
	\author{Sunil D. Maharaj}
	\email{Maharaj@ukzn.ac.za}
	\affiliation{Astrophysics Research Centre, School of Mathematics, Statistics and Computer Science, University of KwaZulu-Natal, Private Bag X54001, Durban 4000, South Africa.}

	\begin{abstract}
         We show that a general but shear-free perturbation of homogeneous and isotropic universes are necessarily silent, without any gravitational waves. We prove this in two steps. First we establish that a shear free perturbation of these universes are acceleration-free and the fluid flow geodesics of the background universe maps onto themselves in the perturbed universe. This effect then decouples the evolution equations of the electric and magnetic parts of the Weyl tensor in the perturbed spacetimes and the magnetic part no longer contains any tensor modes. Although the electric part, that drives the tidal forces, do have tensor modes sourced by the anisotropic stress, these modes have homogeneous oscillations at every point on a time slice without any wave propagation. We also show the presence of vorticity vector waves that are sourced by the curl of heat flux. This analysis shows the critical role of the shear tensor in generating cosmological gravitational waves in an expanding universe.
	\end{abstract}
	
	\pacs{02.21.Cv	, 02.21.Dw}

	\maketitle
	%%%%%%%%%%%%%%%%%%%%%%%%%%%%%%%%%%%%%%%%%
	\section{Introduction}
	%%%%%%%%%%%%%%%%%%%%%%%%%%%%%%%%%%%%%%%%%	
	
This paper analyses the role of spacetime  shear in generating cosmological gravitational waves in a universe constructed via a general perturbation of the homogeneous and isotropic Friedmann Lemaitre Robertson Walker (FLRW) universe. This is important, as the corresponding Newtonian, or quasi Newtonian description of cosmologies, do not permit the existence of gravitational waves. It has been well known for some time that we can describe the cosmology in a quasi-Newtonian way (the {\it Silent models}, that are devoid of any gravitational waves), for observers which move along geodesics which are both shear-free and irrotational \cite{ Matarrese}. The key difference between Newtonian and general relativistic cosmologies, in the absence of shear, emerges from the surprising exact result: A shear-free dust cannot rotate and expand simultaneously \cite{Godel, Ellis, Ellisbook}, which was also shown to hold in the case of barotropic perfect fluid solutions linearised about a FLRW geometry \cite{Nzioki}. Since this result is not true for Newtonian or quasi Newtonian cosmologies, it is not always obvious which behaviour of a perturbed universe (linearised about FLRW geometry) will have a Newtonian counterpart. Further to 
this, most of the previous results have assumed the form of matter in the perturbed universe to be barotropic (as in the background scenario). Hence the effect of introducing heat flux or anisotropic stress perturbatively in the energy momentum tensor of the matter field still remains the subject of further investigations. \\

To discuss the generation of gravitational waves on a given background, the covariant way is to consider the Weyl tensor as the free gravitational field and the metric tensor as it's second order potential field. Given a family of timelike observers, one can then split the Weyl tensor into its electric and magnetic parts, which are projected symmetric trace-free tensors of rank 2, and this is absolutely analogous to splitting the electromagnetic field tensor into the electric field vector and magnetic field vector. The once-contracted Bianchi identities then give the evolution equations of these tensors and one can easily see that the evolution of the electric part is coupled to the curl of the magnetic part and vice versa, exactly analogous to the source-free Maxwell equations for electric and magnetic fields. Using these evolution equations and the tensor identities, we can then find closed wave equations for the electric and magnetic part of the Weyl tensor (see \cite{GosEllis} for detailed explanation). These are the equations for gravitational (tensor) waves containing the tensorial modes of oscillations. Apparently, it may seem that the spacetime shear plays no direct role in generating these waves. However, as we shall see in this paper, the shear has a pivotal role in coupling the evolution equations for electric and magnetic Weyl. Absence of shear completely destroys this structure and makes the perturbed spacetime silent. We note that the silent universes have many interesting physical features. which have been well studied in several treatments \cite{Lesame,Roy1,Bergh,Bergh1,Bol,Bol1} \\

In this paper we work in a semi tetrad covariant and gauge invariant formalism to obtain frame invariant and gauge invariant results. As we know, in perturbation theory the gauge choice becomes important while mapping the zeroth order quantities from the background manifold to the perturbed one. However, it is also well known that any covariantly defined geometric and thermodynamic quantity, that vanishes in the background, is automatically gauge invariant in the perturbed manifold \cite{SW}. Since in the FLRW background the Weyl tensor is identically zero, in the perturbed manifold this will be a first order quantity and naturally gauge invariant. Therefore any result concerning the existence or non-existence of gravitational waves will naturally be gauge invariant in this formalism. 
Unless otherwise specified, we use natural units ($c=8\pi G=1$) throughout this paper, Latin indices run from 0 to 3. We use the $(-,+,+,+)$ signature and the Riemann tensor is defined by
\begin{equation}
R^{a}{}_{bcd}=\Gamma^a{}_{bd,c}-\Gamma^a{}_{bc,d}+ \Gamma^e{}_{bd}\Gamma^a{}_{ce}-\Gamma^e{}_{bc}\Gamma^a{}_{de}\;,
\end{equation}
and the Ricci tensor is obtained by contracting the {\em first} and the {\em third} indices of the above. The Hilbert--Einstein action in the presence of matter is given by
\begin{equation}
{\cal S}=\frac12\int d^4x \sqrt{-g}\left[R-2{\cal L}_m \right]\;,
\end{equation}
variation of which gives the Einstein's field equations as
\be
G_{ab} =T_{ab}\;. 
\ee
 	
%%%%%%%%%%%%%%%%%%%%%%%%%%%%%%%%%%%%%%%%%
	\subsection{Semi tetrad 1+3 splitting of spacetime manifold}
%%%%%%%%%%%%%%%%%%%%%%%%%%%%%%%%%%%%%%%%%
This formalism \cite{EllisCovariant,EllisVarenna,Ellisbook} is based on a local 1+3 threading of the spacetime manifold ( on any open set $\mathcal{U}$ of the manifold), and has proved itself to be an excellent tool for understanding many geometrical and physical aspects of relativistic fluid flows, both in general relativity and in the gauge invariant and covariant cosmological perturbations \cite{Ellisbook,EB}. In this formalism we first define a timelike congruence with a unit tangent vector $u^a$ ($u^au_a=-1$). The most natural choice of defining such a congruence will be along the fluid flow lines, where the vector $u^a$ defines the 4-velocity of a fluid particle. 
Once such a congruence is defined, we then split the spacetime locally in the form $R\otimes V$ where  $R$ denotes the worldline of the fluid particle and $V$ is the 3-space perpendicular to the worldine at any given point. Then we see that any 4-vector $X^a$  can then be projected on the 3-space by the projection tensor $h^a{}_b=g^a{}_b+u^au_b$.  The choice of $u^a$ naturally defines two derivatives: the \textit{covariant
time derivative} along the observers' worldlines (denoted by a dot), and the fully orthogonally \textit{projected covariant derivative} $D$ on the three dimensional space at any given instant of the observer. In other words, for any tensor $ S^{a..b}{}_{c..d}$, we have
	\be
	\dot{S}^{a..b}{}_{c..d}{} = u^{e} \nab_{e} {S}^{a..b}{}_{c..d} \;,
	\ee
	and 
	\be D_{e}S^{a..b}{}_{c..d}{} = h^a{}_f
	h^p{}_c...h^b{}_g h^q{}_d h^r{}_e \nab_{r} {S}^{f..g}{}_{p..q}\;,
	\ee 
	with total projection on all the free indices.  Angle brackets
	denote orthogonal projections of vectors, and the orthogonally
	\textit{projected symmetric trace-free} PSTF part of tensors: 
	\be
	V^{\la a \ra} = h^{a}{}_{b}V^{b}~, ~ S^{\la ab \ra} = \left[
	h^{(a}{}_c {} h^{b)}{}_d - \frac{1}{3} h^{ab}h_{cd}\right] S^{cd}\;.
	\label{PSTF} 
	\ee 
	This also naturally defines the 3-volume element 
	\be \ep_{a b c}=-\sqrt{|g|}\delta^0_{\left[ a
		\right. }\delta^1_b\delta^2_c\delta^3_{\left. d \right] }u^d\;,
	\label{eps1} 
	\ee 
	The covariant derivative of the timelike vector $u^a$ can now be
	decomposed into the irreducible part as 
	\be
	\nabla_au_b=-\dot{u}_au_b+\frac13h_{ab}\Theta+\sigma_{ab}+\ep_{a b
		c}\om^c\;, 
	\ee 
	where $\dot{u}_a$ is the acceleration,
	$\Theta=D_au^a$ is the expansion, $\sigma_{ab}=D_{\la a}u_{b \ra}$
	is the shear tensor and $\om^a=\ep^{a b c}D_bu_c$ is the vorticity
	vector. Similarly the Weyl curvature tensor can be decomposed
	irreducibly into the Gravito-Electric and Gravito-Magnetic parts as
	\ba 
	E_{ab}&=&C_{abcd}u^cu^d=E_{\la ab\ra}\;,\;\\
	H_{ab}&=&\frac12\ep_{acd}C^{cd}{}_{be}u^e=H_{\la ab\ra}\;, 
	\ea 
	which allows for a covariant description of tidal forces and gravitational
	radiation. The energy momentum tensor for a general matter field can 
	be similarly decomposed as follows:
	\be
	T_{ab}=\mu u_au_b+q_au_b+q_bu_a+ph_{ab}+\pi_{ab}\;,
	\ee
	where $\mu=T_{ab}u^au^b$ is the energy density, $p=(1/3 )h^{ab}T_{ab}$ is the isotropic pressure, $q_a=q_{\la a\ra}=-h^{c}{}_aT_{cd}u^d$ is the 3-vector defining 
	the heat flux and $\pi_{ab}=\pi_{\la ab\ra}$ is the anisotropic stress.
	
\subsection{Notations and commutations}

In terms of the fully projected spatial derivatives, we can then define the usual differential operators of vector calculus as follows:
For any projected 3-vector $V$ and second rank 3-tensor $A_{ab}$, we write
	\ba
	{\rm div}\, V=D_aV^a, & ({\rm curl}\, V)_a=\epsilon_{abc}D^bV^c, \label{id0}\\
	({\rm div}\, A)_a=D^bA_{ab}, & ({\rm curl}\, A)_{ab}=\epsilon_{cd\la a}D^cA^d_{\;b\ra}.\label{id00}
	\ea
We also note that unlike the partial derivatives, the projected covariant derivatives on the 3-space do not commute with each other, and neither do they commute with the time derivative. For example, given any scalar function $f$, we have
\ba 
D_{[a}D_{b]}{f} &=& \epsilon_{abc}\omega^c\dot{f},\label{E13}\\
\epsilon^{abc}D_bD_cf &= &2\omega^a\dot{f},\label{E14}\\
\left[D^{<a>}f\right]\dot{}\equiv   h_b^a[D^bf]\dot{}&= &D^a\dot{f}-\frac{1}{3}\Theta D^af. \label{E15}
\ea
Further to this, if $V^a$ is a projected first order 3-vector on the perturbed manifold about a FLRW background, then the linearised commutation relations are given as
\begin{align} 
\left[D^{<a}V^{b>}\right]\dot{}&= D^a\dot{V}^{<b>}-\frac{1}{3}\Theta D^aV^b,\label{E16}\\ 
({\rm div}\, V)\dot{}&= {\rm div}\, \dot{V}-\frac{1}{3}\Theta({\rm div}\, V),\label{E17}\\
D_{[a}D_{b]}V_c &= \frac{1}{3}\left(\mu - \frac{1}{3}\Theta^2\right)h_{c[a}V_{b]},\label{E18}\\
[({\rm curl}\, V)_a]\,\dot{} &=({\rm curl}\, \dot{V})_a-\frac{1}{3}\Theta({\rm curl}\, V)_a.\label{E19}  
\end{align}   
Similarly, for any first order second rank $3$-tensor $A^{ab}$, the following linearised relation holds:
\ba 
D_{[a}D_{b]}A^{cd} &=  &\frac{2}{3}\left(\mu - \frac{1}{3}\Theta^2\right)h_{\;\;[a}^{(c}A^{d)}_{\;\;b]},\label{E20}\\ 
\left[({\rm curl}\, A)_{ab}\right]\,\dot{} &=&({\rm curl}\, \dot{A})_{ab}-\frac{1}{3}\Theta({\rm curl}\, A)_{ab} .\label{E21}
\ea
Apart from these there are several other linearised relations for first order vectors and tensors in a perturbed spacetime about a FLRW background, that we list here	  \cite{Roy}  	
\be\label{id1}
	D^a({\rm curl}\, V)_a=0\,,
	\ee
	\be\label{id2}
	D^b({\rm curl}\, A)_{ab}=\frac12{\rm curl}\,(D^bA_{ab})\,,
	\ee
	\ba\label{id3}
	({\rm curl}\, {\rm curl}\,V)^a&=&D_bD^aV^b - D^2V^a \,.
	\ea
	\ba\label{id4}
	({\rm curl}\, {\rm curl}\,A)_{ab}&=&-D^2A_{ab}+\frac32D_{\la a}({\rm div}\, A)_{b\ra}\nonumber\\
	&&+\left(\mu- \frac{1}{3}\Theta^2\right)A_{ab}.
	\ea
		
In what follows, we will be using these relations repeatedly to extract the results for a shear-free perturbation of FLRW universe.

	%%%%%%%%%%%%%%%%%%%%%%%%%%%%%%%%%%%%%%%%%%%%%%%%%%%%%%%%%%%%%%%%%%%%%%%%
	\section{Shear-free perturbation around FLRW spacetime: Linearised field equations}
	%%%%%%%%%%%%%%%%%%%%%%%%%%%%%%%%%%%%%%%%%%%%%%%%%%%%%%%%%%%%%%%%%%%%%%%%

As we discussed in the introductory section, our background spacetime is a homogeneous and isotropic FLRW universe. Therefore the only non zero (zeroth order) geometric and thermodynamic quantities in the background are 
\be
\mathcal{D}_0=\{ \Theta, \mu, p\}\;.
\ee
Since the background spacetime is homogeneous and isotropic, the projected spatial derivatives of the above quantities will vanish identically, and the evolution equations that governs the temporal evolution of these quantities are given as 
\be\label{E01}
\dot{\Theta}= -\frac{1}{3}\Theta^2 -\frac{1}{2}(\mu + 3p)\;.
\ee
\be\label{E07}
\dot{\mu} = -\Theta(\mu +p)\;.
\ee
The above two equations, with a given equation of state of the form $p=p(\mu)$, will then completely solve the background system. Let us now perturb the above system by considering all the quantities that vanishes in the background be of first order of smallness in the perturbed spacetime. Along with perturbing the geometrical quantities, we also perturb the energy momentum tensor of the matter by introducing small amount of heat flux and anisotropic stress. In other words the matter in the perturbed spacetime is no longer barotropic. However, we take the shear tensor to be identically zero in the perturbed manifold. That is, we would like to show the importance of shear by the method of negation. In that case the quantities that are of first order smallness in the perturbed spacetime are given as
\be
\mathcal{D}_1=\{E_{\la ab\ra},H_{\la ab\ra},\dot{u}_{\la a\ra}, \om_{\la a\ra},q_{\la a\ra}, \pi_{\la ab\ra} \}\;.
\ee
Apart from the above set, any spatial derivative of zeroth order background quantities are also first order. As discussed before, all these first order quantities will be gauge invariant. 
The Riemann tensor of the perturbed spacetime can now be completely specified in therms of the matter variables and the Weyl variables as follows:
\begin{widetext}
\ba
R^{ab}_{\;\;\;cd}&= &2\left( 2u^{[a}u_{\;\;[c}E^{b]}_{\;\;d]}+2h^{[a}_{\;\;[c}E^{b]}_{\;\;d]}
	-u^{[a}h^{b]}_{\;\;[c}q_{d]}-u_{[c}h^{[a}_{\;\;d]}q^{b]}\right)-2\left( u^{[a}u_{[c}\pi^{b]}_{\;\;d]}-h^{[a}_{[c}\pi^{b]}_{\;\;d]}
    -\epsilon^{abe}u_{[c}H_{d]e}-\epsilon_{cde}u^{[a}H^{b]e}\right)\nonumber\\
   &&+\frac{2}{3}\left[ (\mu +3p)u^{[a}u_{[c}h^{b]}_{\;\;d]}
    +\mu h^a_{\;\;[c}h^b_{\;\;d]}\right] \;.
 \ea
 \end{widetext}
Using the above, we can now write the Ricci identities of the vector $u^a$ and once and twice contracted Bianchi identities, project them along $u^a$ and on the instantaneous spatial 3-surface of the observer and then finally linearise them by keeping the terms only to the first order smallness, to get the following set of linearised evolution equations and constraints.  
\subsection{Evolution equations}	 	 
	\be\label{E1}
	\dot{\Theta}-\mbox{div }\dot{u} =  -\frac{1}{3}\Theta^2 -\frac{1}{2}(\mu + 3p)\,,
	\ee
	\be\label{E3}
	\dot{\omega}^{< a>}-\frac{1}{2}(\mbox{curl }\dot{u})^a = -\frac{2}{3}\Theta\omega^a\,,
	\ee
	\ba
	\dot{E}^{<ab>}= & (\mbox{curl }H)^{ab}-\Theta\left(E^{ab}+\frac{1}{6}\pi^{ab}\right)\nonumber\\ 
	&-\frac{1}{2}\left(\dot{\pi}^{<ab>}+D^{<a}q^{b>}\right)\,, \label{E4}
	\ea
	\be\label{E5}
	\dot{H}^{<ab>}=-(\mbox{curl }E)^{ab} +\frac{1}{2}(\mbox{curl }\pi)^{ab} -\Theta H^{ab}\,,
	\ee
	\be\label{E6}
	\dot{q}^{\la a\ra}+D^ap +D_b \pi^{ab}=-\frac{4}{3}\Theta q^a-(\mu +p)\dot{u}^a \,,
	\ee
	\be\label{E7}
	\dot{\mu} +D_aq^a= -\Theta(\mu +p)\,.
	\ee

\subsection{Constraints}
	
	\be\label{E2}
	(C_0)^{ab} \equiv D^{<a}\dot{u}^{b>}-E^{ab}+\frac{1}{2}\pi^{ab}=0\,,
	\ee
	\be\label{E8}
	(C_1)^a \equiv q^a-\frac{2}{3}D^a\Theta+\epsilon^{abc}D_b\omega_c=0\,,
	\ee
	\be\label{E9}
	(C_2)\equiv D_a\omega^a=0 \,,
	\ee
	\be\label{E10}
	(C_3)^{ab} \equiv H^{ab}+ D^{<a}\omega^{b>}=0\,,
	\ee
	\be\label{E11}
	(C_4)^a \equiv  D_b\left( E^{ab}+\frac{1}{2}\pi^{ab}\right) -\frac{1}{3}D^a\mu +\frac{1}{3}\Theta q^a=0\,,
	\ee
	\be\label{E12}
	(C_5)^a\equiv D_bH^{ab}+(\mu +p)\omega^a +\frac{1}{2}(\mbox{curl }q)^a =0\,.
	\ee
We note here three important points:
\begin{enumerate}
\item First of all, the above system of equations is not closed. To close the system we must supply a thermodynamic relation between isotropic pressure, fluid energy density, heat flux and  anisotropic stress in the form of $F(\mu, p, q^a,\pi^{ab})=0$.
\item Secondly, the constraints $C_1, C_2,\cdots,C_5$ are the constraint equations for general matter motion, which are known to be consistently {\it time propagated} along
$u^a$ locally. However the absence of shear gives the new constraint  $C_0 $, which was the shear evolution equation in the original system. This new constraint must be {\it spatially compatible} with the original constraints and furthermore it should be consistently time propagated locally. 
\item And finally, although the spatial derivative operators $D^a$ are orthogonal to $u^a$ ( that is $u^aD_a=0$), if the vorticity is non vanishing then these operators do not span a three dimensional surface as the commutators of these spatial directional derivatives acting on a scalar do not vanish. In that case the instantaneous rest frame of an observer is not a genuine 3-surface but rather a collection of tangent planes. 
\end{enumerate}	

%%%%%%%%%%%%%%%%%%%%%%%%%%%%%%%%%%% 
 \section{Spatial Consistency of the New Constraint: An important theorem}
%%%%%%%%%%%%%%%%%%%%%%%%%%%%%%%%%%% 
 
In this section, we will state and prove the following important theorem on the 4-acceleration of matter for a perturbed spacetime linearised about a FLRW background.

\begin{thm}
If a homogeneous and isotropic spacetime is perturbed in a shear-free way, then the 4-acceleration of matter in the perturbed spacetime is necessarily zero, if the matter obeys the strong energy condition. In other words, the geodesics of matter flow lines in the background, map onto themselves in the perturbed manifold. 
\end{thm}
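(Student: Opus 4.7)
The plan is to exploit the fact that, once shear is set to zero, the new relation~\reff{E2} is an additional first-order constraint that must be spatially self-consistent. I would take its spatial divergence $D_a(C_0)^{ab}=0$ and evaluate both sides at linear order, aiming to cancel every differential operator acting on $\dot u^b$ so that only an algebraic multiplier of $\dot u^b$ remains.

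On the ``geometric'' side, expanding the PSTF gradient with the linearised commutator~\reff{E18} gives
\be
D_a D^{\la a}\dot{u}^{b\ra} = \tfrac{1}{2}D^2\dot{u}^b + \tfrac{1}{6}D^b(\mbox{div }\dot{u}) + \tfrac{1}{3}\left(\mu-\tfrac{1}{3}\Theta^2\right)\dot{u}^b\,.
\ee
On the ``matter'' side, the divergence constraint~\reff{E11} trades $D_aE^{ab}$ for $\tfrac{1}{3}D^b\mu-\tfrac{1}{3}\Theta q^b-\tfrac{1}{2}D_a\pi^{ab}$, and the momentum equation~\reff{E6} then trades the surviving $D_a\pi^{ab}$ for $-\dot q^b-D^b p-\tfrac{4}{3}\Theta q^b-(\mu+p)\dot u^b$, yielding
\be
D_a D^{\la a}\dot{u}^{b\ra} = (\mu+p)\dot{u}^b+\dot{q}^b+\Theta q^b+\tfrac{1}{3}D^b\mu+D^b p\,.
\ee

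Next I would eliminate $q^b$ and $\dot q^b$ using~\reff{E8}, i.e.\ $q^b=\tfrac{2}{3}D^b\Theta-(\mbox{curl }\om)^b$, differentiating with the commutations~\reff{E15} and~\reff{E19} and substituting the background Raychaudhuri relation~\reff{E1} for $\dot\Theta$. The $D^b\mu$ and $D^b p$ contributions then cancel exactly, while the residual $(\mbox{curl }\dot\om)^b$ is rewritten via the vorticity evolution~\reff{E3}; taking a curl of that equation and invoking the identity~\reff{id3} produces precisely the $D^2\dot u^b$, $D^b(\mbox{div }\dot u)$ and $(\mu-\Theta^2/3)\dot u^b$ pieces needed to match the geometric side. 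After these cancellations only the algebraic relation
\be
\left[\tfrac{1}{3}(\mu+3p)+\tfrac{2}{9}\Theta^2\right]\dot{u}^b=0
\ee
survives, and under the strong energy condition $\mu+3p>0$ the bracket is strictly positive, so $\dot u^b=0$, as claimed.

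The main obstacle will be the careful linearised book-keeping. Every time a spatial derivative commutes past another on a projected vector it emits a $(\mu-\Theta^2/3)$ piece through~\reff{E18}, and derivatives of background scalars generate first-order gradients that must be tracked consistently. I will need to verify that the $D^2\dot u^b$ and $D^b(\mbox{div }\dot u)$ contributions produced on the matter side via the $q$--$\om$--$\dot u$ chain exactly match those on the geometric side, and that the $D^b\mu$, $D^b p$ pieces cancel in full, so that only the proportionality to $\dot u^b$ remains. Once that accounting is complete, the recognition of the residual coefficient as $\tfrac{1}{3}(\mu+3p)+\tfrac{2}{9}\Theta^2$ makes the role of the strong energy condition immediate and closes the argument.
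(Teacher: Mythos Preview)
Your proposal is correct and follows essentially the same route as the paper: take the spatial divergence of the new shear-free constraint~\reff{E2}, feed in the existing constraints~\reff{E8},~\reff{E11} and evolution equations~\reff{E1},~\reff{E3},~\reff{E6} together with the commutators and the curl--curl identity~\reff{id3}, and watch all derivative terms on $\dot u^b$ cancel to leave a purely algebraic condition. Your final coefficient $\tfrac{1}{3}(\mu+3p)+\tfrac{2}{9}\Theta^2$ is simply $\tfrac{2}{3}$ times the paper's $\tfrac{1}{3}\Theta^2+\tfrac{1}{2}(\mu+3p)$, so the conclusions agree; the only difference is organisational---you split the computation into a ``geometric'' and a ``matter'' side of $D_aD^{\la a}\dot u^{b\ra}$, whereas the paper isolates $D_bE^{ab}$ first and substitutes into~\reff{E11} at the end.
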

\begin{proof}
This theorem can be proved by checking the spatial consistency of the new constraint (\ref{E2}), that is, if this constraint is consistent with the existing constraints of the field equations. 
Contracting the commutation relation (\ref{E18}) we get
\begin{equation}\label{E23}
D_bD^aV^b = D^a(\mbox{div V}) + \frac{2}{3}\left(\mu - \frac{1}{3}\Theta^{2}\right)V^a,
\end{equation}
and by definition we know
\begin{equation}\label{E24} 
D^{<a}V^{b>} =\frac{1}{2}\left(D^aV^b+D^bV^a\right)-\frac{1}{3}h^{ab}(\mbox{div V}).
\end{equation}
Now, from our new constraint (\ref{E2}), we see that 
\begin{equation}\label{E25}
D_bE^{ab} = D_b(D^{<a}\dot{u}^{b>}) + \frac{1}{2}D_b \pi^{ab}.
\end{equation} 
Using relation (\ref{E24}), we get
\begin{equation}\label{E255}
  \begin{split}	
  D_bE^{ab}& = \frac{1}{2}D_bD^a\dot{u}^b + \frac{1}{2}D^2\dot{u}^a- \frac{1}{3}D^a(\mbox{div }\dot{u})\\
  & + \frac{1}{2}D_b \pi^{ab}.
  \end{split}
  \end{equation}
  But, by relation (\ref{id3}), we have
 \begin{equation}\label{E256}
 	\frac{1}{2}D^2\dot{u}^a = \frac{1}{2}D_bD^a\dot{u}^b - \frac{1}{2}(\mbox{curl curl }\dot{u})^a.
 \end{equation}\\[2 mm]
Substituting relation (\ref{E256}) into (\ref{E255}) and using (\ref{E23}) we finally have
\begin{equation}\label{E26}
	\begin{split}
	D_bE^{ab} &=\frac{2}{3}D^a(\mbox{div }\dot{u})+ \frac{2}{3}(\mu - \frac{1}{3}\Theta^2)\dot{u}^a
	- \frac{1}{2}(\mbox{curl curl }\dot{u})^a \\
	 &+ \frac{1}{2}D_b \pi^{ab}.
   \end{split}
\end{equation}

To calculate $(\mbox{curl curl }\dot{u})^a$, we note that the field equation (\ref{E3}) gives
\begin{equation}\label{E261}
\frac{1}{2}(\mbox{curl }\dot{u})^a = \dot{\omega}^{<a>} +\frac{2}{3}\Theta\omega^a.
\end{equation}
Taking $\mbox{curl}$ on both sides and using the commutation relation (\ref{E19}) we have
\begin{equation}\label{E29}
	\frac{1}{2}(\mbox{curl curl }\dot{u})^a = [(\mbox{curl }\omega)^a]\dot{} + \Theta(\mbox{curl }\omega)^a .
\end{equation}

To find the expression for $[(\mbox{curl }\omega)^a]\dot{}$, we use the field equation (\ref{E8}), that gives
\begin{equation}\label{E30}
	(\mbox{curl }\omega)^a = \frac{2}{3}D^a\Theta - q^a\;.
\end{equation}
Taking the dot of the above equation and using relation (\ref{E15}) we get
 \begin{equation}\label{E302}
[(\mbox{curl }\omega)^a]\dot{} = \frac{2}{3}D^a\dot{\Theta} - \frac{1}{9} D^a\Theta^2 -\dot{q}^a\;.
\end{equation}
We now use the field equation (\ref{E1}) and the above gets simplified to
 \begin{equation}\label{E31}
[(\mbox{curl }\omega)^a]\dot{} = \frac{2}{3}D^a(\mbox{div } \dot{u})-\frac{1}{3}D^a\Theta^2-\frac{1}{3}D^a(\mu + 3p)-\dot{q}^a. 
\end{equation} 
Substituting elations (\ref{E30}) and (\ref{E31}) into (\ref{E29}), we obtain
\begin{equation}\label{E32}
	\frac{1}{2}(\mbox{curl curl }\dot{u})^a = \frac{2}{3}D^a(\mbox{div } \dot{u})-\frac{1}{3}D^a(\mu + 3p)-\dot{q}^a-\Theta q^a.
\end{equation}
Using  (\ref{E32}) in (\ref{E26}) we get,
\begin{equation}\label{E33}
	D_bE^{ab} = \frac{2}{3}(\mu - \frac{1}{3}\Theta^2)\dot{u}^a+\frac{1}{3}D^a(\mu + 3p)+\dot{q}^a+\Theta q^a+\frac{1}{2}D_b\pi^{ab}.
\end{equation}
From the field equation (\ref{E11}) we know
\be\label{E331}
D_bE^{ab}+\frac{1}{2}D_b\pi^{ab} -\frac{1}{3}D^a\mu +\frac{1}{3}\Theta q^a=0.
\ee
Using (\ref{E33}) We get
\begin{equation}\label{E34}
 \frac{2}{3}\left(\mu - \frac{1}{3}\Theta^2\right)\dot{u}^a+D^ap+\dot{q}^a+\frac{4}{3}\Theta q^a+D_b\pi^{ab}=0.  
\end{equation}
Again, from field equation (\ref{E6}) we have
\be\label{E35}
\dot{q}^a +\frac{4}{3}\Theta q^a +D^ap +D_b \pi^{ab}= -(\mu +p)\dot{u}^a \,.
\ee
So, the relation (\ref{E34}) becomes
\begin{equation}\label{E351}
	\frac{2}{3}\left(\mu - \frac{1}{3}\Theta^2\right)\dot{u}^a-(\mu +p)\dot{u}^a = 0\,,
\end{equation}
or
\begin{equation}\label{E36}
	\left[\frac{1}{3}\Theta^2 +\frac{1}{2}(\mu + 3p) \right]\dot{u}^a = 0.
\end{equation} 
Now, if the matter obeys strong energy condition, then we have $(\mu + 3p)\ge0$ and of course $\Theta^2\ge0$. Therefore the term in the square bracket of the above equation is strictly non-negative and only vanishes for the trivial case of Minkowski spacetime, that we exclude here. Therefore the matter acceleration must identically vanish for the perturbed spacetime and the geodesics of matter flow lines in the background, map onto themselves in the perturbed manifold. 
\end{proof}

A direct corollary of the above result is
\begin{cor}
If a homogeneous and isotropic spacetime is perturbed in a shear-free way and the matter obeys strong energy condition, then the electric part of the Weyl tensor is half of the anisotropic stress.
\end{cor}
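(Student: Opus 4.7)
The plan is to invoke the theorem just proved and read off the result directly from the new constraint that arose when shear was set to zero. Specifically, the acceleration $\dot{u}^a$ was previously the shear-evolution variable; killing shear converted its evolution equation into the algebraic constraint
\begin{equation}
(C_0)^{ab} \equiv D^{<a}\dot{u}^{b>}-E^{ab}+\tfrac{1}{2}\pi^{ab}=0,
\end{equation}
which is precisely equation (\ref{E2}). The corollary then falls out by substitution.

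Concretely, I would proceed in two short steps. First, I would cite the Theorem: under a shear-free perturbation of FLRW with matter satisfying $\mu+3p\ge 0$, one has $\dot{u}^a \equiv 0$ throughout the perturbed manifold (excluding the Minkowski degenerate case where $\Theta=\mu=p=0$). Second, I would take the PSTF spatial gradient $D^{<a}\dot{u}^{b>}$ of this identically vanishing vector; since the operator $D^{<a}(\cdot)^{b>}$ is linear in its argument, it annihilates the zero vector, so $D^{<a}\dot{u}^{b>}=0$ on the perturbed spacetime.

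Plugging this into the constraint $(C_0)^{ab}$ immediately yields
\begin{equation}
E^{ab} = \tfrac{1}{2}\pi^{ab},
\end{equation}
which is the claim. Both sides are already PSTF by construction (the electric Weyl tensor is PSTF by definition, and so is the anisotropic stress), so no symmetrisation or trace removal is needed, and the identification is consistent.

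There is no genuine obstacle here — the work was done in the Theorem. The only mild subtlety worth flagging in the write-up is gauge invariance: because both $E^{ab}$ and $\pi^{ab}$ vanish in the FLRW background, both are first-order gauge invariant by the Stewart--Walker lemma, so the relation $E^{ab}=\tfrac12\pi^{ab}$ is a frame- and gauge-invariant statement, as required by the covariant formalism adopted in the paper. This clean algebraic tie between tidal forces and anisotropic stress is what will later prevent the electric Weyl tensor from supporting propagating tensor modes, so it is worth stating the corollary explicitly even though its proof is a one-liner.
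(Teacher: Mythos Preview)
Your proposal is correct and matches the paper's approach exactly: the paper presents this as an immediate consequence of the Theorem (vanishing acceleration) inserted into the new constraint $(C_0)^{ab}$ of equation~(\ref{E2}), which is precisely the two-step substitution you outline. Your added remarks on gauge invariance via Stewart--Walker are a welcome clarification but not required for the argument itself.
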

Another straightforward corollary of this theorem is
\begin{cor}
If a homogeneous and isotropic spacetime is perturbed in a shear-free way with the matter obeying strong energy condition and the vorticity is zero at any given instant on a observer's worldline, then it continues to be zero in the entire worldine (as $\omega^a=0\Rightarrow\dot{\omega}^a=0$).
\end{cor}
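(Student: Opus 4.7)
The plan is to combine the conclusion of Theorem~1 with the vorticity evolution equation (\ref{E3}) and reduce the claim to a uniqueness statement for a homogeneous linear ordinary differential equation along the observer's worldline. The parenthetical hint $\omega^a=0\Rightarrow\dot{\omega}^a=0$ already pinpoints the correct mechanism: once the right-hand side of the vorticity evolution equation becomes proportional to $\omega^a$ alone, vanishing of $\omega^a$ at one instant freezes its subsequent value along that worldline.

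Concretely, I would first invoke Theorem~1 to conclude that under the stated hypotheses (shear-free perturbation, strong energy condition) the 4-acceleration of the matter flow satisfies $\dot{u}^a=0$ on the perturbed spacetime. This immediately gives $(\mbox{curl }\dot{u})^a=0$, so equation (\ref{E3}) collapses to
\begin{equation}
\dot{\omega}^{<a>} = -\tfrac{2}{3}\Theta\,\omega^a.
\end{equation}
This is a homogeneous linear first-order transport equation for $\omega^a$ along the timelike congruence, with coefficient $-\tfrac{2}{3}\Theta$ a smooth background quantity. If $\omega^a$ vanishes at a single instant on a given worldline, standard ODE uniqueness then forces $\omega^a\equiv 0$ along that entire worldline, which is exactly the conclusion of the corollary.

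The only step requiring any real care is the passage from the projected time derivative $\dot{\omega}^{<a>}$ appearing in (\ref{E3}) to the unprojected $\dot{\omega}^a$ governing the transport. Because vorticity is spatial by construction, $u_a\omega^a=0$, differentiating this orthogonality constraint along $u^a$ determines the $u^a$-component of $\dot{\omega}^a$ algebraically in terms of $\omega^a$ and $\dot{u}^a$; the latter has just been set to zero by Theorem~1, so the projected evolution equation indeed captures the full evolution of $\omega^a$. I do not anticipate a serious obstacle: once Theorem~1 is in hand, the corollary reduces to a single substitution followed by an appeal to ODE uniqueness, which is presumably why the authors record it as a short corollary rather than as an independent theorem.
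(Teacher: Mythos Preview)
Your proposal is correct and follows precisely the route the paper intends: the authors record the corollary with only the parenthetical hint $\omega^a=0\Rightarrow\dot{\omega}^a=0$, which is exactly the observation that Theorem~1 kills the $(\mbox{curl }\dot{u})^a$ term in (\ref{E3}), reducing it to the homogeneous equation $\dot{\omega}^{<a>}=-\tfrac{2}{3}\Theta\omega^a$. Your additional care about the projected versus unprojected time derivative is a welcome elaboration but not a departure from the paper's argument.
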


Thus we showed that if the new constraint emerging due to vanishing of shear has to be spatially consistent with the original constraints, the matter acceleration must identically vanish and the electric part of the Weyl is completely specified by the anisotropic stress. This, however, does not interfere with the definition of magnetic part of Weyl, given by constraint  (\ref{E10}). This can be shown in the following way:
Using relation (\ref{E24}) in (\ref{E10}) we have,
\begin{equation}\label{E372}
	H^{ab}=-\frac{1}{2}D^a\omega^b-\frac{1}{2}D^b\omega^a+\frac{1}{3}h^{ab}({\rm div}\,\omega)\,,
\end{equation}
and,
\begin{equation}\label{E373}
	D_bH^{ab}=-\frac{1}{2}D_bD^a\omega^b-\frac{1}{2}D^2\omega^a+\frac{1}{3}h^{ab}D_bD^a\omega_a\,.
\end{equation}
But by (\ref{E9}) the above becomes,
\begin{equation}\label{E38}
	D_bH^{ab}=-\frac{1}{2}D_bD^a\omega^b-\frac{1}{2}D^2\omega^a. 
\end{equation} 
Now using (\ref{id3}) we know
\begin{equation}\label{E381}
	-\frac{1}{2}D^2\omega^a =  -\frac{1}{2}(\mbox{curl curl }\omega)^a -\frac{1}{2}D_bD^a\omega^b.
\end{equation}
 And (\ref{E38}) becomes
 \begin{equation}\label{E39}
  	D_bH^{ab}= -D_bD^a\omega^b+\frac{1}{2}(\mbox{curl curl }\omega)^a .
 \end{equation}
  But, using (\ref{E23}) and (\ref{E9}),
 \begin{equation}\label{E391}
 	D_bD^a\omega^b= \frac{2}{3}(\mu -\frac{1}{3}\Theta^2 )\omega^a \mbox{,}
 \end{equation}
equation (\ref{E39}) becomes,
\begin{equation}\label{E40}
	D_bH^{ab}= \frac{1}{2}(\mbox{curl curl }\omega)^a - \frac{2}{3}(\mu -\frac{1}{3}\Theta^2 )\omega^a.
\end{equation}

To find the expression for $(\mbox{curl curl }\omega)^a$ we use the field equation (\ref{E8}) that gives
\be\label{E41}
 (\mbox{curl }\omega)^a =  \frac{2}{3}D^a\Theta  - q^a\;.
\ee
Taking the curl on both sides and simplifying using (\ref{E14}) we then have
 \be\label{E42}
(\mbox{curl curl }\omega)^a =  \frac{4}{3}\omega^a \dot{\Theta} - (\mbox{curl }q)^a.
\ee
Substituting (\ref{E42}) into (\ref{E40}) we get
\begin{equation}\label{E43}
	D_bH^{ab}= \frac{2}{3}\omega^a\dot{\Theta} - \frac{1}{2}(\mbox{curl }q)^a  - \frac{2}{3}(\mu -\frac{1}{3}\Theta^2 )\omega^a.
\end{equation}
But from the field equation (\ref{E12}) we have
\begin{equation}\label{E431}
	D_bH^{ab}+(\mu + p)\omega^a + \frac{1}{2}(\mbox{curl }q)^a = 0.
\end{equation}
Plugging in the value of $D_bH^{ab}$ from (\ref{E43})  and simplifying we finally get the constraint
\begin{equation}\label{E44}
	\frac{2}{3}\omega^a(\mbox{div }\dot{u})  = 0.
\end{equation}
We see that the above constraint is identically satisfied if the matter acceleration vanishes.  
%%%%%%%%%%%%%%%%%%%%%%%%%%%%%%%%%%%%%%
\section{Non existence of gravitational waves}    
%%%%%%%%%%%%%%%%%%%%%%%%%%%%%%%%%%%%%%
As discussed in detail in \cite{GosEllis,Roy,Dunsby}, the general way to look for the closed form wave equation of electric or magnetic part of Weyl tensor, is to take the curl of the evolution equation (\ref{E4}) and dot of (\ref{E5}) and then use the commutation relations (\ref{E18}) and (\ref{id4}). However, for this to happen, the equations (\ref{E4}) and (\ref{E5}) must be coupled to each other, just like their electromagnetic Maxwell equation analogue. Interestingly, when the shear tensor is identically zero in the perturbed spacetime, we see by Corollary 1, the magnetic Weyl evolution decouples itself from the electric Weyl evolution and (\ref{E5})
now becomes 
\be\label{E51}
	\dot{H}^{<ab>}= -\Theta H^{ab},
\ee
and on any given worldline the magnetic Weyl has an oscillatory nature given by 
\be\label{E511}
\ddot{H}^{<ab>}= -(\dot{\Theta} H^{ab}+\Theta \dot{H}^{ab})=-[\dot{\Theta} H^{ab}+\Theta (-\Theta H^{ab})].
\ee
Using (\ref{E1}), the above equation becomes
\be\label{E52}
\ddot{H}^{<ab>}= \left[\frac{4}{3}\Theta^2 +\frac{1}{2}\left(\mu + 3p\right)\right] H^{ab}.  
\ee
which is very similar to the vorticity vector. This is hardly surprising, as the magnetic Weyl doesn't contain any tensor modes anymore, and it is purely driven by the vector modes of the vorticity vector. The electric Weyl, however, still contains tensor modes due to the presence of anisotropic stress. To see how these tensor modes behave, we use the usual technique of making all the scalar and the vector modes vanish. That is, we take the vorticity, heat flux and the gradient of all the scalars to be zero. In that case we see that the magnetic part of the Weyl vanishes and we get 
\begin{equation}\label{E531}
	\dot{E}^{<ab>}= -\Theta(E^{ab}+\frac{1}{6}\pi^{ab})-\frac{1}{2}\dot{\pi}^{<ab>},
\end{equation}
and 
\begin{equation}\label{E532}
	\pi^{ab}=2E^{ab}.
\end{equation}
This then implies
\begin{equation}\label{E533} 
\dot{\pi}^{ab}=2\dot{E}^{ab}.
\end{equation}
So (\ref{E531}) becomes
\begin{equation}\label{E534}
	\dot{E}^{<ab>}=-\frac{2}{3}\Theta E^{<ab>}
\end{equation}
\begin{equation}\label{E535}
	\ddot{E}^{<ab>}=-\frac{2}{3}\dot{\Theta} E^{<ab>}-\frac{2}{3}\Theta \dot{E}^{<ab>}.
\end{equation}
Finally using relations (\ref{E1}) and (\ref{E534}) into 
(\ref{E535}) above we can show that the electric part of the Weyl obeys the following tensor oscillatory equation
\begin{equation}\label{E53}
	\ddot{E}^{<ab>}= \frac{1}{3}\left[2 \Theta^2 +(\mu + 3p) \right]E^{ab}.
 \end{equation}
However, we see that this is just homogeneous oscillation of all points at a given time slice and hence there is no gravitational wave propagation. It is interesting to note that, although there is no tensor wave propagation, there exists a vector wave as described in the following proposition:
\begin{prop}
In a general but shear free perturbation of homogeneous and isotropic universes the vorticity vector obeys a wave equation, that is sourced by the curl of the heat flux and is given by
\begin{equation}\label{E54}
\Box \omega^a\equiv	\ddot{\omega}^{<a>}-D^2\omega^a= \frac{1}{3}\omega^a(\mu-3p)-(\mbox{curl }q)^a.
\end{equation}
\end{prop}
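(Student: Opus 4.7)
The plan is to build a closed wave equation by computing $\ddot{\omega}^{<a>}$ and $D^2\omega^a$ independently and then combining them with the aid of the background equations. By Theorem~1 the perturbation is acceleration-free, $\dot{u}^a=0$, so the vorticity evolution equation (\ref{E3}) collapses to $\dot{\omega}^{<a>}=-\tfrac{2}{3}\Theta\omega^a$. Differentiating this relation along $u^a$ at first order and eliminating $\dot{\Theta}$ with the linearised Raychaudhuri equation (\ref{E1})---which under $\dot{u}^a=0$ reduces to the FLRW form $\dot{\Theta}=-\tfrac{1}{3}\Theta^2-\tfrac{1}{2}(\mu+3p)$---expresses $\ddot{\omega}^{<a>}$ as a scalar coefficient built from $\Theta^2$ and $(\mu+3p)$ multiplying $\omega^a$.

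The Laplacian piece is unpacked using the vector identity (\ref{id3}), giving $D^2\omega^a=D_bD^a\omega^b-(\mathrm{curl}\,\mathrm{curl}\,\omega)^a$. The first term is reduced by the commutator (\ref{E23}) together with the divergence constraint (\ref{E9}) to exactly (\ref{E391}). The double curl follows by taking $\mathrm{curl}$ of the constraint (\ref{E41}) and applying the scalar commutator (\ref{E14}) to $\Theta$, reproducing (\ref{E42}). Combining these two ingredients yields $D^2\omega^a$ as the transverse source $(\mathrm{curl}\,q)^a$ plus a further scalar multiple of $\omega^a$ in $\Theta^2$, $\mu$ and $p$, after $\dot{\Theta}$ is again eliminated through Raychaudhuri.

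Subtracting $D^2\omega^a$ from $\ddot{\omega}^{<a>}$ leaves $-(\mathrm{curl}\,q)^a$ together with a single scalar multiple of $\omega^a$. Invoking the background Friedmann (Hamiltonian) constraint that ties $\Theta^2$ to $\mu$ for a spatially flat FLRW background collapses this residual coefficient to $\tfrac{1}{3}(\mu-3p)$, producing the stated wave equation.

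The main obstacle is the careful algebraic bookkeeping at this final stage: both $\ddot{\omega}^{<a>}$ and $D^2\omega^a$ carry their own $\Theta^2$, $\mu$ and $p$ contributions, and the clean compact source on the right-hand side only crystallises once the background constraint relating the expansion rate to the energy density is used to eliminate the residual $\Theta^2$ terms. Everything else is a direct, mechanical application of commutators and identities already catalogued earlier in the paper.
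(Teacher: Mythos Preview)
Your approach is essentially identical to the paper's: compute $\ddot{\omega}^{\langle a\rangle}$ from the acceleration-free vorticity evolution plus Raychaudhuri, compute $D^2\omega^a$ via (\ref{id3}), (\ref{E391}) and the curl of (\ref{E41}), and subtract. The paper's proof simply ends with ``Subtracting (\ref{E546}) from (\ref{E542}), we get the result,'' without displaying the final algebraic collapse; your explicit observation that the residual coefficient $\tfrac{4}{9}\Theta^2-\mu-p$ reduces to $\tfrac{1}{3}(\mu-3p)$ only upon invoking the spatially flat Friedmann relation $\Theta^2=3\mu$ is correct and in fact makes explicit an assumption the paper leaves unstated.
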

\begin{proof} 
By (\ref{E3}) We have, 
\begin{equation}\label{E541}
	\dot{\omega}^{<a>}=-\frac{2}{3}\Theta \omega^a \,,
\end{equation}
and
\begin{equation}\label{E542}
	\ddot{\omega}^{<a>}=\frac{1}{3}\omega^a[2\Theta^2 +(\mu +3p)].
\end{equation}
By (\ref{id3}) we have
\begin{equation}\label{E543}
	D^2\omega^a=D_bD^a\omega^b-(\mbox{curl curl }\omega)^a.
\end{equation}
We know by (\ref{E42}) that
\be\label{E544}
(\mbox{curl curl }\omega)^a =  \frac{4}{3}\omega^a \dot{\Theta} - (\mbox{curl }q)^a\,,
\ee
or, using (\ref{E1}),
\be\label{E545}
(\mbox{curl curl }\omega)^a = - \frac{1}{3}\omega^a\left[\frac{4}{3}\Theta^2+2(\mu + 3p)\right]  - (\mbox{curl }q)^a.
\ee
Substituting (\ref{E545}) into (\ref{E543}) We obtain
\begin{equation}\label{E546}
	\begin{split}
	D^2\omega^a &=D_bD^a\omega^b+\frac{1}{3}\omega^a\left[\frac{4}{3}\Theta^2+2(\mu + 3p)\right]\\
	& + (\mbox{curl }q)^a.
	\end{split}
\end{equation}
Subtracting (\ref{E546}) from (\ref{E542}), we get the result. 
\end{proof}

\section{Discussions}     

In this paper we established the important effects of spacetime shear via the method of negation, by perturbing a homogeneous and isotropic universe in a shear-free fashion. We allowed the matter energy momentum tensor  to have a general perturbation. The key points that emerged from our analysis are as follows:
\begin{enumerate}	   
\item Our analysis once again emphasised the relation of spacetime shear with inhomogeneity and anisotropy that was initially pointed out in \cite{Dadhich}. Inhomogeneity and anisotropy in a spacetime is directly manifested via pressure gradients, anisotropic stresses and heat flux, which in turn generates the matter acceleration. We showed that vanishing of shear necessarily makes the acceleration vanish in the perturbed spacetimes, that makes the matter geodesics in the background to map onto themselves in the perturbed manifold. 
\item We further proved the crucial role of shear in coupling the evolution of electric and magnetic part of the Weyl tensor, that puts gravito-electromagnetism on the same footing as normal electromagnetism. The absence of shear breaks this coupling and furthermore the magnetic part of the Weyl ceases to have any tensor modes (that are purely generated by the shear). Hence the spacetime no longer contains any tensorial waves. It is interesting to note that in general, the shear tensor itself obeys a wave equation in a perturbed FLRW background \cite{Dunsby}. It is very clear that this shear waves mediates the interaction between the electric and magnetic Weyl, to produce the gravitational waves.
\item Although the shear-free perturbation of FLRW universes, do not contain any gravitational waves, it was shown that these contain vorticity waves sourced by the curl of heat flux. If the matter in the perturbed universe continues to be a perfect fluid, then these waves will die down. However if a heat flux with non zero curl is present in the perturbed manifold, then this will act as a forcing term and the vector waves can grow indefinitely resulting in structure formation.
\end{enumerate}
Thus we established the validity of the results obtained in \cite{Matarrese}, even when we allow for the general matter perturbation.

	\begin{acknowledgments}
		RMM and RG are supported by National Research Foundation (NRF), South Africa. SDM 
		acknowledges that this work is based on research supported by the South African Research Chair Initiative of the Department of
		Science and Technology and the National Research Foundation.
	\end{acknowledgments}
	%\appendix
	%\section{First order}

\end{document}